\newtheorem{theorem}{Theorem}
\newtheorem{lemma}{Lemma}
\begin{document}
\title{Asymptotic Performance Analysis of Fluid Antenna Systems: An Extreme Value Theory Perspective
}

\author{Yi Zhang, Jintao Wang, Zheng Shi, Xu Wang, Guanghua Yang, Shaodan Ma, and Kai-Kit Wong
\thanks{Yi Zhang, Jintao Wang, Zheng Shi, Xu Wang, and Guanghua Yang are with the School of Intelligent Systems Science and Engineering, Jinan University, Zhuhai, 519070, China (e-mails: yizhang@stu2023.jnu.edu.cn; wang.jintao@connect.um.edu.mo; zhengshi@jnu.edu.cn; xuwang@jnu.edu.cn; ghyang@jnu.edu.cn).}
\thanks{Shaodan Ma is with Faculty of Science and Technology, University of Macau, Macau, 999078, China (e-mail: shaodanma@um.edu.mo).}
\thanks{Kai-Kit Wong is affiliated with the Department of Electronic and Electrical Engineering, University College London, United Kingdom (e-mail: kai-kit.wong@ucl.ac.uk).}}

%
\maketitle
\begin{abstract}
Fluid antenna systems (FAS) allow dynamic reconfiguration to achieve superior diversity gains and reliability. To quantify the performance scaling of FAS with a large number of antenna ports, this paper leverages extreme value theory (EVT) to conduct an asymptotic analysis of the outage probability (OP) and ergodic capacity (EC). The analysis reveals that the OP decays approximately exponentially with the number of antenna ports. Moreover, we establish upper and lower bounds for the asymptotic EC, uncovering its double-logarithmic scaling law. Furthermore, we re-substantiate these scaling laws by exploiting the fact that the mode of the Gumbel distribution scales logarithmically. Besides, we theoretically prove that spatial correlation among antenna ports degrades both OP and EC. All analytical findings are conclusively validated by numerical results.
\end{abstract}

\begin{IEEEkeywords}
Asymptotic analysis, fluid antenna system, ergodic capacity, extreme value theory, outage probability.
\end{IEEEkeywords}

\section{Introduction}

\IEEEPARstart{F}{luid} antenna technology has emerged as a promising paradigm for future wireless systems, offering the ability to dynamically reconfigure the physical location and geometry of the radiating element. This flexibility provides significant advantages in spatial diversity over conventional fixed-position antennas \cite{10303274}. Owing to its potential for improving spectral efficiency, reliability, and latency, fluid antenna systems (FAS) are considered a key enabler for sixth-generation (6G) networks \cite{10753482}.

In recent years, considerable research efforts have been devoted to evaluating the reliability and spectral efficiency of FAS, with particular emphasis on the outage probability (OP) and ergodic capacity (EC) \cite{10130117,11023237,10564133,9264694,10924151,11039166,11184593}. The OP, as a fundamental measure of link reliability, has been extensively studied. For instance, the work in \cite{10130117} derived an approximate distribution for the equivalent channel gain in FAS, facilitating OP evaluation. Further, an asymptotic OP expression in the high-SNR regime was obtained in \cite{11023237} by approximating the spatial correlation matrix among fluid antenna ports. Departing from the widely adopted Jake’s model, the authors of \cite{10564133} employed copula theory to model the joint distribution of fading channels across ports, capturing both linear and non-linear correlation structures. This approach led to exact and asymptotic OP expressions under generalized correlation models. As a complementary performance metric, the EC of FAS has also attracted attention \cite{10924151,11039166,11184593}. Specifically, analytical frameworks originally developed for OP have been extended to characterize the EC in fluid antenna multiple access (FAMA) systems \cite{11039166} and reconfigurable intelligent surface (RIS)-aided multiuser FAS \cite{11184593}, particularly in the high-SNR regime. However, existing literature often yields complex expressions that obscure clear insights, especially when the number of antenna ports is large and spatial correlation is considered. Consequently, the scaling behavior of OP and EC with respect to the number of ports remains inadequately understood.

To address this gap, this paper leverages extreme value theory (EVT) to derive the asymptotic OP and EC under a large number of antenna ports. Our analysis reveals that the OP decays approximately exponentially with the number of ports. Furthermore, we derive upper and lower bounds for the asymptotic EC, demonstrating its double-logarithmic scaling with the number of antenna ports. These scaling laws are also verified by analyzing the scaling behavior of the Gumbel distribution's mode. Additionally, we rigorously prove that spatial correlation among antenna ports adversely affects both OP and EC performance.

The remainder of the paper is structured as follows. Section \ref{sec_model} introduces the FAS system model. Section \ref{sec_per} presents the asymptotic analysis of the OP and EC. Numerical results are provided in Section \ref{sec_num}, and conclusions are drawn in Section \ref{sec_con}.

\section{System Model}\label{sec_model}
We consider a point-to-point system where a single-antenna Base Station (BS) communicates with a user equipped with a fluid antenna. The FAS consists of $N$ uniformly spaced antenna ports along a linear region of length $W\lambda$, where $\lambda$ is the carrier wavelength and $W$ is the normalized length. The received signal at the $n$-th port is given by
\begin{align}\label{sys}
\mathbf{y}_n=\sqrt{P}h_n\mathbf{x}+\mathbf{z}_n,
\end{align}
where $P$ is the transmit power, $\mathbf{x}$ is the transmitted sequence, $h_n$ is the fading channel coefficient from the BS to the $n$-th port, and $\mathbf{z}_n \sim \mathcal{CN}(0,\delta^2 {\bf I})$ is the additive white Gaussian noise (AWGN). Hence, the instantaneous signal-to-noise ratio (SNR) at port $n$ is $\gamma_n = |h_n|^2 \bar{\gamma}$, where $\bar{\gamma} = P/\delta^2$ denotes the average transmit SNR.

The FAS operates by selecting the port that maximizes the instantaneous SNR, following a selection combining scheme. The equivalent received SNR $\gamma=\bar{\gamma} g$ is determined by the maximum instantaneous channel power gain $g=\max\nolimits_{n\in\{1,\cdots,N\}} {|h_n|^2}$. 
\subsection{Correlation Channel Model}

Due to the extremely close proximity of the ports, the channel coefficients $\{h_n\}$ are highly correlated. To enable a mathematically tractable asymptotic analysis, we adopt Wong's correlated channel model, where the channel coefficient $h_n$ is expressed as \cite{wong2022closed}
\begin{align}\label{ch-co}
h_n=\sigma\left( \sqrt{1-\mu^2}\vartheta_{n}+\mu\vartheta_0\right),
\end{align}
where $\vartheta_{n}$ and $\vartheta_0$ are independent circularly-symmetric complex Gaussian random variables with zero mean and unit variance, i.e., $\vartheta_{n},\vartheta_0\sim\mathcal{CN}(0,1)$ and $\Bbb{E}(|h_n|^2)=\sigma^2$. The correlation among fading channels can be characterized by the parameter $\mu$, which is given by
\begin{align}\label{mu_W}
\mu = \sqrt{2{}_{1}F_{2}\left( {1}/{2};1;{3}/{2};-\pi^{2}W^{2}\right) -{J_{1}\left( 2\pi W\right) }/({\pi W})},
\end{align}
where $_{1}F_{2}\left(\cdot;\cdot;\cdot;\cdot\right)$ represents the generalized hypergeometric function and $J_1(\cdot) $ denotes the first-order Bessel function of the first kind.
\subsection{Performance Metrics}
We consider two performance metrics in this letter, including outage probability and ergodic capacity. Their definitions are explicitly given as follows.
\subsubsection{Outage Probability (OP)} The OP is defined as the probability that the mutual information $\mathcal{I}$ falls below a target rate $R$, i.e.,
\begin{equation}\label{eqn:p_out_def}
    P_{\rm out}=\Pr\left( \mathcal{I}<R\right)=F_g\left({(2^R-1)}/{\bar{\gamma}}\right),
\end{equation}
where $\mathcal{I}=\log_2(1+\bar{\gamma}g)$ denotes the mutual information and $F_g(\cdot)$ is the cumulative distribution function (CDF) of $g$.
\subsubsection{Ergodic Capacity (EC)}
The EC of the considered FAS can be calculated as
\begin{equation}\label{er_def}
    \bar{C}=\Bbb{E}(\mathcal{I})=\int_{0}^{\infty} \log_2(1+\bar{\gamma}x)f_{g}(x){\rm d} x,
\end{equation}
where $f_g(\cdot)$ denotes the probability density function (PDF) of $g$. 

Clearly, in order to derive the OP and the EC, it boils down to deriving the distribution of the equivalent channel gain $g$. However, determining the exact distribution $F_g(x)$ or $f_g(x)$ is generally intractable due to the intricate spatial correlation among the $N$ ports, especially as $N \to \infty$. Therefore, we focus on characterizing the asymptotic distribution of $g$ using EVT to derive novel scaling laws and fundamental performance limits.

\section{EVT Based Performance Analysis}\label{sec_per}
This section employs EVT to obtain the asymptotic expressions for OP and EC under the condition of a large number of ports, i.e., $N\to \infty$, with which scaling laws of the OP and the EC can be revealed.
\subsection{OP}
The correlation among channel gains $\{|h_n|^2,n\in [1,N]\}$ hinders the derivation of the OP in \eqref{eqn:p_out_def}. Fortunately, we note that the channel gains $\{|h_n|^2,n\in [1,N]\}$ are conditionally independent non-central chi-squared distributed with $2$ degrees of freedom, conditioned on the common channel parameter $g_0\triangleq{|\vartheta_0|^2}$ \cite[Th.1.3.4]{simon2004digital}. The conditional PDF and CDF of $\{|h_n|^2,n\in [1,N]\}$ are given below.
\begin{lemma}\label{le1_ind}
    The conditional PDF and CDF of $|h_n|^2$ given $g_0=|\vartheta_0|^2$ is written as 
    \begin{align}\label{h_pdf}
    f_{|h_n|^2|g_0}\left(\left.x_n\right|t\right)&=\frac{1}{{\sigma^2(1-\mu^2)}}e^{{-\frac{{\sigma^2}{\mu}^2t+x_n}{\sigma^2(1-\mu^2)}}}I_0\bigg {(} \frac{2\sqrt{{\mu}^2x_nt}}{\sigma(1-\mu^2)}\bigg{)},
    \end{align}
    \begin{align}\label{h_cdf}
    F_{|h_n|^2|g_0}\left(\left.x_n\right|t\right)&=1-Q_1\left(\sqrt{\frac{2\mu^2t}{1-\mu^2}},\sqrt{\frac{2x_n}{\sigma^2(1-\mu^2)}}\right),
    \end{align}
    where $g_0$ obeys the exponential distribution, i.e., $f_{g_0}(t)=e^{-t}$ for $t\ge 0$, $I_0(\cdot)$ denotes the modified Bessel function \cite[Eq.9.6.10]{abramowitz1968handbook}, and $Q_1(\cdot)$ is the first-order Marcum Q-function.
\end{lemma}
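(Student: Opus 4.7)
The plan is to start from the model $h_n = \sigma(\sqrt{1-\mu^2}\,\vartheta_n + \mu\vartheta_0)$ in \eqref{ch-co} and condition on $\vartheta_0$. Since $\vartheta_n$ is independent of $\vartheta_0$ and $\vartheta_n \sim \mathcal{CN}(0,1)$, the conditional law of $h_n$ given $\vartheta_0$ is complex Gaussian with mean $\sigma\mu\vartheta_0$ and variance $\sigma^2(1-\mu^2)$. Decomposing into real and imaginary parts yields two independent real Gaussians, each with per-dimension variance $\sigma^2(1-\mu^2)/2$ and squared means summing to $|\sigma\mu\vartheta_0|^2 = \sigma^2\mu^2 g_0$. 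Therefore the normalized random variable $2|h_n|^2/[\sigma^2(1-\mu^2)]$ is, conditionally on $\vartheta_0$, a non-central chi-squared variable with two degrees of freedom and non-centrality parameter $\lambda = 2\mu^2 g_0/(1-\mu^2)$. The key observation is that this conditional distribution depends on $\vartheta_0$ only through its squared modulus $g_0$, which is precisely what makes it legitimate to condition on the scalar $g_0$ rather than on the full complex $\vartheta_0$.

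Next I would invoke the textbook PDF $f_Y(y) = \tfrac{1}{2}e^{-(y+\lambda)/2}I_0(\sqrt{\lambda y})$ and CDF $F_Y(y) = 1 - Q_1(\sqrt{\lambda},\sqrt{y})$ of a non-central chi-squared variable with two degrees of freedom and non-centrality $\lambda$. Applying the change of variables $y = 2x_n/[\sigma^2(1-\mu^2)]$ with Jacobian $2/[\sigma^2(1-\mu^2)]$ and substituting $\lambda = 2\mu^2 t/(1-\mu^2)$ with $t = g_0$ will reproduce \eqref{h_pdf} and \eqref{h_cdf} directly; the argument of $I_0$ collapses to $2\sqrt{\mu^2 t x_n}/[\sigma(1-\mu^2)]$ by the identity $\sqrt{\lambda y} = \sqrt{2\mu^2 t/(1-\mu^2)}\cdot\sqrt{2x_n/[\sigma^2(1-\mu^2)]}$.

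Finally, the stated exponential law of $g_0$ follows because $\vartheta_0 \sim \mathcal{CN}(0,1)$ has independent real and imaginary parts each distributed as $\mathcal{N}(0,1/2)$, so $|\vartheta_0|^2$ is the sum of squares of two such Gaussians, which is exponential with unit mean, i.e., $f_{g_0}(t)=e^{-t}$ for $t\ge 0$.

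There is no substantive obstacle here; the only care needed is the bookkeeping between the complex variance $\sigma^2(1-\mu^2)$ appearing in the $\mathcal{CN}$ notation and the per-dimension real variance $\sigma^2(1-\mu^2)/2$ that enters the non-central chi-squared formula. A useful by-product of this conditioning argument, which will be exploited in the subsequent asymptotic analysis, is that the ports $\{|h_n|^2\}_{n=1}^N$ become \emph{conditionally i.i.d.} given $g_0$, since the only common randomness is $\vartheta_0$ and the $\vartheta_n$'s are mutually independent.
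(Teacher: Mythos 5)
Your proposal is correct and is essentially the standard derivation that the paper implicitly relies on: the paper offers no explicit proof of Lemma~\ref{le1_ind}, instead citing the non-central chi-squared characterization of Rician-type conditional gains from \cite[Th.1.3.4]{simon2004digital}, which is exactly the conditioning-on-$\vartheta_0$ argument you spell out. Your bookkeeping of the per-dimension variance, the non-centrality parameter $\lambda = 2\mu^2 t/(1-\mu^2)$, and the change of variables all check out against \eqref{h_pdf} and \eqref{h_cdf}, and your closing remark on conditional i.i.d.-ness given $g_0$ is precisely the property the paper uses in \eqref{eqn:p_out_der}.
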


By invoking the independence among  $|h_n|^2$ given $g_0$ stated in Lemma \ref{le1_ind}, the OP in \eqref{eqn:p_out_def} can be further rewritten as 
\begin{align}\label{eqn:p_out_der}
    P_{\rm out}&={\mathbb E_{{g_0}}}\left( {\Pr \left( {\left. { {\mathop {\max }\limits_{n \in \{ 1, \cdots ,N\} } |{h_n}{|^2}} < \frac{{{2^R} - 1}}{{\bar \gamma }}} \right|{t}} \right)} \right),
\end{align}
where $\mathbb E(\cdot)$ denotes the expectation operator. Thus, \eqref{eqn:p_out_der} turns to determining the extreme value of channel gains $|h_n|^2$ given $g_0$. To characterize the asymptotic behavior of the equivalent channel gain $g$ under a large number of ports at FAS, the conditional EVT is leveraged, as introduced in the following lemma. 
\begin{lemma}\label{le2_ind}
    Let $z_1, z_2, \cdots, z_N$ be independent and identically distributed (i.i.d.) random variables with the CDF $F(z)$ and the PDF $f(x)$. The asymptotic distribution of the maximal value $z_{\max} = \max\{z_1, z_2, \cdots, z_N\}$ exists and converges to the Gumbel distribution $F_{z_{\max}}(z) \simeq \exp(-\exp(-(z-l_N)/a_N))$, if $f(z) > 0$, $f(\cdot)$ is differentiable on the interval $(z_0, \infty)$ for some $z_0$, and the following conditions holds\cite{david2004order}
    \begin{align}\label{le2_lim}
        \lim_{z\to\infty}{\frac{\mathrm{d}}{\mathrm{d}z}}\frac{1-F\left(z\right)}{f(z)}=0.
    \end{align}
    Furthermore, the location and the scale parameters are chosen as $l_N=F^{-1}\left(1-{1}/{N}\right)$ and $a_N=(Nf(l_N))^{-1}>0$.  
\end{lemma}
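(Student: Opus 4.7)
The plan is to reduce the weak-convergence statement to a single tail-balance limit. Since the $z_i$ are i.i.d.,
\begin{equation*}
\Pr(z_{\max} \le l_N + a_N z) = F(l_N + a_N z)^N.
\end{equation*}
Under the hypotheses $l_N = F^{-1}(1-1/N) \to \infty$, so $F(l_N + a_N z) \to 1$ and the Taylor expansion $\log F(l_N + a_N z) = -(1 - F(l_N + a_N z)) + O((1 - F(l_N + a_N z))^2)$ applies. Consequently, convergence of $F(l_N + a_N z)^N$ to the Gumbel CDF $\exp(-e^{-z})$ is equivalent to showing the tail-balance statement
\begin{equation*}
N\bigl(1-F(l_N+a_N z)\bigr) \to e^{-z} \quad \text{for every } z\in\mathbb{R}.
\end{equation*}

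To reach this limit, I would introduce the \emph{reciprocal hazard} $h(u) = (1-F(u))/f(u)$, which is positive and differentiable on $(z_0,\infty)$ by assumption. The defining identity $1-F(l_N) = 1/N$ immediately gives $h(l_N) = a_N$. Since $-\log(1-F)$ has derivative $f/(1-F) = 1/h$, integrating between $l_N$ and $l_N + a_N z$ yields the clean representation
\begin{equation*}
\log\frac{1-F(l_N+a_N z)}{1-F(l_N)} = -\int_0^z \frac{a_N}{h(l_N+a_N s)}\,\mathrm{d}s,
\end{equation*}
so, together with $-\log(1-F(l_N)) = \log N$, the problem collapses to showing that this integral converges to $z$ as $N \to \infty$.

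The crux, and what I expect to be the main obstacle, is controlling the integrand $a_N/h(l_N+a_N s)$ uniformly on compact $s$-intervals. This is exactly what the von Mises condition \eqref{le2_lim} is designed for: $h'(u) \to 0$ together with the mean value theorem yields $h(l_N+a_N s) - h(l_N) = a_N s\, h'(\xi_{N,s})$ for some intermediate $\xi_{N,s} \to \infty$, and dividing by $h(l_N) = a_N$ gives $h(l_N+a_N s)/a_N = 1 + o(1)$ uniformly for $s$ in any compact set. The integral therefore tends to $z$ by dominated convergence, yielding the tail-balance limit and hence $F(l_N + a_N z)^N \to \exp(-e^{-z})$. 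Minor technical care is needed for $z < 0$ (the integration direction reverses and one must verify $l_N + a_N z$ eventually exceeds $z_0$) and in confirming that $a_N \to 0$, so that the shift $a_N z$ remains asymptotically small relative to $l_N$ throughout the argument.
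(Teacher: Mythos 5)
The paper does not actually prove this lemma: it is quoted as a known result (the von Mises sufficient condition for the Gumbel domain of attraction) with a citation to David and Nagaraja, so there is no in-paper argument to compare against. Your proposal supplies the standard textbook proof of that condition, and it is essentially correct: the reduction of $F(l_N+a_Nz)^N\to\exp(-e^{-z})$ to the tail-balance limit $N\bigl(1-F(l_N+a_Nz)\bigr)\to e^{-z}$, the identification $a_N=h(l_N)$ with $h=(1-F)/f$, the representation of $\log\bigl[(1-F(l_N+a_Nz))/(1-F(l_N))\bigr]$ as $-\int_0^z a_N/h(l_N+a_Ns)\,\mathrm{d}s$, and the mean value theorem combined with $h'\to 0$ to force the integrand to $1$ uniformly on compacts are exactly the steps of the classical argument (e.g., Leadbetter--Lindgren--Rootz\'en). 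This is a genuine addition relative to the paper, which treats the statement as citable background.

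One genuine, though repairable, slip: you propose to ``confirm that $a_N\to 0$.'' That is false in general under these hypotheses --- for the unit exponential distribution $h\equiv 1$ and $a_N=1$ for every $N$ --- so the argument must not rest on it. What you actually need, and what is true, is that $l_N+a_Ns\to\infty$ uniformly for $s$ in a compact set, so that the intermediate points $\xi_{N,s}$ escape to infinity and $1-F(l_N+a_Ns)\to 0$. This follows because $h'\to 0$ implies $h(u)=o(u)$, hence $a_N=h(l_N)=o(l_N)$, together with $l_N\to\infty$ (which holds since $f>0$ on $(z_0,\infty)$ forces the upper endpoint of the support to be $+\infty$). A second, very minor point: for $z<0$ the expansion $\log F=-(1-F)+O((1-F)^2)$ should be justified after the tail-balance limit is established (it is that limit which gives $1-F(l_N+a_Nz)\to 0$ when $z<0$), not before; you flag $z<0$ as needing care, and this is precisely the care it needs.
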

According to {{Lemma \ref{le2_ind}}}, the conditional distribution of the equivalent channel gain $g$ is given in the following theorem.
\begin{theorem}\label{th_h_dis}
   The asymptotic conditional distribution of the channel gain $g$ given $g_0 $ as $N\to \infty$ follows a Gumbel distribution with the conditional CDF and PDF as
\begin{align}\label{eqn:f_cdf}
F_{g|g_0}(x|t)&\simeq\exp{\left(-e^{-\frac{x-l_{N,t}}{a_{N,t}}}\right)},
\end{align}
\begin{align}\label{eqn:f_pdf}
f_{g|g_0}(x|t)&\simeq\frac{1}{a_{N,t}}\exp{\left(-e^{-\frac{x-l_{N,t}}{a_{N,t}}}-\frac{x-l_{N,t}}{a_{N,t}}\right)}.
\end{align}
   where the location and the scale parameters $l_{N,t}$ and $a_{N,t}$ satisfy $F_{|h_n|^2|g_0}\left(\left.l_{N,t}\right|t\right)=1-{1}/{N}$ and $a_{N,t}=(Nf_{|h_n|^2|g_0}(l_{N,t}|t))^{-1}$. Besides, the first three conditional moments of $g$ given $g_0$ are got by $\Bbb{E}(g|g_0=t)=l_{N,t}+\Upsilon  a_{N,t}$, $\Bbb{E}(g^2|g_0=t)=(l_{N,t}+a_{N,t}\Upsilon )^2+\pi^2a_{N,t}^2/6$, and $\Bbb{E}(g^3|g_0=t)=(l_{N,t}+a_{N,t}\Upsilon )^3+\pi^2a_{N,t}^2(l_{N,t}+a_{N,t}\Upsilon )/2+2a_{N,t}^3\zeta(3)$, where $\zeta(\cdot)$ and $\Upsilon =0.5772$  denote Riemann zeta function and the Euler–Mascheroni constant, respectively. Consequently, the variance of $g$ conditioned on $g_0$ is obtained as ${\rm{Var}}(g|g_0=t)={(a_{N,t}\pi)^2}/{6}$. 
    Furthermore, the asymptotic expressions of $l_{N,t}$ and $a_{N,t}$ can be represented as $l_{N,t}\simeq\sigma^2(1-\mu^2)\ln N$ and $a_{N,t}\simeq \sqrt{4\pi\mu\sigma^4(1-\mu^2)^{3/2}}(t\ln N)^{1/4}$ as $N\to \infty$. In addition, the mode of the conditional channel gain $g$ given $g_0$ is $l_{N,t}$ and $\mathbb E(g|g_0) \ge {\rm Mo}(g|g_0)$, where ${\rm Mo}(g|g_0) =  \arg\max_{x} f_{g|g_0}(x|t)$.
\end{theorem}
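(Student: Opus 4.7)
The plan is to apply Lemma \ref{le2_ind} to the family of conditional distributions in Lemma \ref{le1_ind}, and then read off every other assertion from the standard structure of the Gumbel law.

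First, I would verify that $F_{|h_n|^2|g_0}(\cdot|t)=1-Q_1(a,b(x))$ meets the hypotheses of Lemma \ref{le2_ind}, where $a=\sqrt{2\mu^2 t/(1-\mu^2)}$ and $b(x)=\sqrt{2x/(\sigma^2(1-\mu^2))}$. Positivity and differentiability of the PDF on $(0,\infty)$ are immediate from \eqref{h_pdf}. For the von Mises limit \eqref{le2_lim}, I would substitute the large-argument expansion $I_0(u)\sim e^u/\sqrt{2\pi u}$ into \eqref{h_pdf} and complete the square to obtain $f_{|h_n|^2|g_0}(x|t)\sim [\sigma^2(1-\mu^2)\sqrt{2\pi ab(x)}]^{-1} e^{-(b(x)-a)^2/2}$, together with the companion Laplace-type tail estimate $Q_1(a,b)\sim \sqrt{b/(2\pi a)}\,e^{-(b-a)^2/2}/(b-a)$. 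The dominant Gaussian factors cancel in the ratio, leaving $(1-F)/f\sim \sigma^2(1-\mu^2)\,b/(b-a)\to \sigma^2(1-\mu^2)$, whose derivative in $x$ behaves like $-a\,(db/dx)/[b(b-a)^2]$ and vanishes as $x\to\infty$.

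Second, once Lemma \ref{le2_ind} applies, \eqref{eqn:f_cdf}--\eqref{eqn:f_pdf} are immediate, and $l_{N,t}$, $a_{N,t}$ are defined as stated. The first three moments and the variance follow from the MGF of the Gumbel law, $M(s)=\Gamma(1-a_{N,t}s)e^{l_{N,t}s}$: the first three cumulants of the scale-$a_{N,t}$ Gumbel are $l_{N,t}+a_{N,t}\Upsilon$, $(a_{N,t}\pi)^2/6$ and $2a_{N,t}^3\zeta(3)$, and converting these into raw moments via $\mathbb{E}(Z^3)=m^3+3m\,\mathrm{Var}(Z)+\kappa_3$ with $m=\mathbb{E}(Z)$ yields exactly the expressions claimed. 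The mode follows from $(\log f_{g|g_0})'=(u-1)/a_{N,t}$ with $u=e^{-(x-l_{N,t})/a_{N,t}}$, which vanishes iff $x=l_{N,t}$; the inequality $\mathbb{E}(g|g_0)\ge \mathrm{Mo}(g|g_0)$ is then trivial since $\Upsilon,a_{N,t}>0$.

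Third, for the closed-form asymptotics of $l_{N,t}$ and $a_{N,t}$ I would invert the defining equation $Q_1(a,b)=1/N$ using the tail estimate above. Taking logarithms, the leading behaviour is $(b-a)^2/2\sim\ln N$, so $b(l_{N,t})\sim\sqrt{2\ln N}$, which gives $l_{N,t}\sim \sigma^2(1-\mu^2)\ln N$. Substituting this back into the density asymptotic makes the exponential factor collapse to $1/N$, leaving the prefactor $[\sigma^2(1-\mu^2)\sqrt{2\pi a b(l_{N,t})}]^{-1}$; evaluating $a b(l_{N,t})\sim 2\mu\sqrt{t\ln N/(1-\mu^2)}$ and inverting yields $a_{N,t}\sim \sqrt{4\pi\mu\sigma^4(1-\mu^2)^{3/2}}\,(t\ln N)^{1/4}$ after routine simplification.

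The main obstacle is the asymptotic analysis of the Marcum Q function. It has no closed antiderivative, so the Laplace-type expansions of both $Q_1(a,b)$ and the conditional density must be matched carefully, and one must verify that the $O(\sqrt{\ln N})$ sub-leading correction in $b(l_{N,t})\sim a+\sqrt{2\ln N}$ does not feed back into the leading-order coefficients of $l_{N,t}$ or $a_{N,t}$. Everything downstream of this step, in particular the Gumbel moments and the mode identity, is essentially bookkeeping.
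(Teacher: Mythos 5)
Your proposal tracks the paper's proof essentially step for step: the same large-argument expansion of $I_0$ and the same Marcum-$Q$ tail approximation $Q_1(\alpha,\beta)\simeq\sqrt{\beta/\alpha}\,Q(\beta-\alpha)$ to verify the von Mises condition \eqref{le2_lim}, the same Gumbel moment bookkeeping and mode computation (your direct observation that $\mathbb{E}(g|g_0)-{\rm Mo}(g|g_0)=\Upsilon a_{N,t}>0$ is in fact cleaner than the paper's appeal to the Van Zwet condition), and the same inversion of the tail to obtain $l_{N,t}\simeq\sigma^2(1-\mu^2)\ln N$ and $a_{N,t}\propto(t\ln N)^{1/4}$. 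The one delicate point you flag at the end is real but is treated no more carefully in the paper: it likewise obtains $a_{N,t}$ by forcing only the exponential factor of the density to equal $1/N$, even though combining the exact quantile definition $F_{|h_n|^2|g_0}(l_{N,t}|t)=1-1/N$ with the limit $(1-F)/f\to\sigma^2(1-\mu^2)$ that both of you establish would give $a_{N,t}=(1-F(l_{N,t}|t))/f(l_{N,t}|t)\to\sigma^2(1-\mu^2)$, i.e.\ a bounded scale parameter rather than one growing like $(\ln N)^{1/4}$ --- so on this step you are exactly as (non-)rigorous as the source.
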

\begin{proof}
    Please see Appendix \ref{sec_th_1}.
\end{proof}

By substituting \eqref{eqn:f_cdf} into \eqref{eqn:p_out_der}, the outage probability is asymptotic to
\begin{align}\label{op}
P_{\rm out}
\simeq{\mathbb E_{{g_0}}}\left(F_{g|g_0}(x|t)\right)=\int_{0}^\infty \exp{(-e^{-\frac{\frac{2^R-1}{\bar{\gamma}}-l_{N,t}}{a_{N,t}}}-t)}   {\rm d}t.
\end{align}

By using the asymptotic expressions of $l_{N,t}$ and $a_{N,t}$ in Theorem \ref{th_h_dis}, we arrive at
\begin{equation}\label{eqn:out_saclinglaw}
    P_{\rm out} \simeq {\mathbb E_{{g_0}}}\left(\exp \left( { - {e^{{\frac{{{{\left( {\ln N} \right)}^{3/4}}}}{{\sqrt {4\pi \mu {{(1 - {\mu ^2})}^{ - 1/2}}} {t^{1/4}}}}}}}} \right) \right).
\end{equation}
It is found that the OP decreases with $(\ln N)^{3/4}$ in double-exponential scaling law, i.e., $P_{\rm out} = \exp(-\exp({\rm const} \cdot (\ln N)^{3/4}))$. This conclusion can be roughly drawn using the result of mode, namely ${\rm Mo}(g|g_0) = l_{N,t} \propto \ln N$. Furthermore, OP is an increasing function of $\mu$, which justifies the adverse impact of spatial correlation between antenna ports.

\subsection{EC}\label{sec_ec}
By using the conditional independence of $|h_1|^2,\cdots,|h_N|^2$ given $g_0$, as in \eqref{eqn:p_out_der}, we get a new representation of \eqref{er_def} as 
\begin{align}\label{eqn:er_der}
    {\bar{C}}&=\Bbb{E}_{g_0}\left( \Bbb{E}_{g|g_0}(\log_2(1+\bar{\gamma}x)|t)\right).
\end{align}
By using Theorem \ref{th_h_dis} and the exponential distribution of $g_0$, the asymptotic EC $\bar{C}_{\rm asy}$ is derived as
    \begin{align}\label{eqn:er}
    &\bar{C}_{\rm asy}=\frac{1}{a_{N,t}}\int\limits_{0}^\infty\int\limits_{0}^{\infty}  {\log_2(1+\bar{\gamma}x)}e^{-e^{-\frac{x-l_{N,t}}{a_{N,t}}}-\frac{x-l_{N,t}}{a_{N,t}}-t}    {\rm d} x {\rm d}t.
    \end{align}
However, due to the implicit expression of $l_{N,t}$ and $a_{N,t}$, it is challenging to gain more insights from \eqref{eqn:er}. To address this, We resort to the upper and lower bounds of EC, as given in the following theorem.


\begin{theorem}\label{th_cap}
The upper and lower bounds of the asymptotic EC are given by
\begin{align}\label{bound_ER}
    \bar{C}_{\rm asy}^{\rm low}<\bar{C}_{\rm asy}\le \bar{C}_{\rm asy}^{\rm upp}.
\end{align}
where
\begin{equation}\label{eqn:upp_ec}
    \bar{C}_{\rm asy}^{\rm upp} = \int_0^\infty \log_2\left(1+\bar{\gamma}\left( l_{N,t}+\Upsilon  a_{N,t} \right)\right)e^{-t} {\rm{d}} t,
\end{equation}
\begin{align}\label{lowerbound_ER}
      \bar{C}_{\rm asy}^{\rm low}=&\int\limits_0^{\infty}\log_2(1+\bar{\gamma}l_{N,t})e^{-t} {\rm{d}} t+\frac{\Upsilon \bar{\gamma} }{\ln 2}\int\limits_0^{\infty}  \frac{a_{N,t}}{1+\bar{\gamma}l_{N,t}} e^{-t}{\rm{d}} t \notag\\
     &- \frac{(6\Upsilon ^2+\pi^2)\bar{\gamma}^2}{12\ln 2}\int_0^{\infty}  \frac{a_{N,t}^2}{(1+\bar{\gamma}l_{N,t})^2}e^{-t} {\rm{d}} t.
\end{align}    
\end{theorem}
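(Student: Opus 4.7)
My plan is to start from the conditional decomposition \eqref{eqn:er_der} and bound the inner expectation $\mathbb{E}_{g|g_0}[\log_2(1+\bar\gamma g)|t]$ above and below by deterministic functions of $(l_{N,t},a_{N,t})$; the outer integration against the exponential density $e^{-t}$ then reproduces \eqref{eqn:upp_ec} and \eqref{lowerbound_ER} by linearity. All the random quantities needed have already been computed in Theorem \ref{th_h_dis}: the conditional mean $l_{N,t}+\Upsilon a_{N,t}$ and the conditional second moment about $l_{N,t}$, namely $\mathbb{E}[(g-l_{N,t})^2|t]=a_{N,t}^2(\Upsilon^2+\pi^2/6)$.

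The upper bound is immediate from Jensen's inequality applied to the concave function $x\mapsto\log_2(1+\bar\gamma x)$: $\mathbb{E}_{g|g_0}[\log_2(1+\bar\gamma g)|t]\le\log_2(1+\bar\gamma(l_{N,t}+\Upsilon a_{N,t}))$, and integrating against $e^{-t}$ yields \eqref{eqn:upp_ec}. For the lower bound I would Taylor-expand the integrand about the Gumbel mode $l_{N,t}$,
\begin{align*}
\log_2(1+\bar\gamma g)=\log_2(1+\bar\gamma l_{N,t})+\frac{\bar\gamma(g-l_{N,t})}{\ln 2\,(1+\bar\gamma l_{N,t})}-\frac{\bar\gamma^2(g-l_{N,t})^2}{2\ln 2\,(1+\bar\gamma l_{N,t})^2}+R(g;l_{N,t}),
\end{align*}
take the conditional expectation, and substitute the two moments above. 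The coefficient $(6\Upsilon^2+\pi^2)/12$ in \eqref{lowerbound_ER} is exactly $\tfrac12(\Upsilon^2+\pi^2/6)$, so the target expression is reproduced as soon as $\mathbb{E}[R(g;l_{N,t})|t]$ is shown to be non-negative.

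The hard part is precisely that sign analysis of the remainder. Pointwise $R$ is not of one sign: its Lagrange form picks up $f'''(x)=2\bar\gamma^3/[\ln 2\,(1+\bar\gamma x)^3]>0$, so $R$ inherits the sign of $(g-l_{N,t})^3$. The way I would push through is to exploit the right-skewness of the Gumbel law: the third central moment $\mathbb{E}[(g-l_{N,t})^3|t]=a_{N,t}^3(\Upsilon^3+\Upsilon\pi^2/2+2\zeta(3))$ is strictly positive, and writing $R$ in its integral form and applying Fubini together with the monotonicity of $f'''$ allows one to dominate the left-tail negative contribution by the right-tail positive one. This yields $\mathbb{E}[R|t]>0$, and averaging against $e^{-t}$ gives the strict inequality $\bar{C}_{\rm asy}^{\rm low}<\bar{C}_{\rm asy}$; the matching non-strict inequality for the upper bound is simply the Jensen step. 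Everything outside this remainder analysis is mechanical substitution of the Gumbel moment formulas already established in Theorem \ref{th_h_dis}.
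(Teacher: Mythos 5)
Your proposal follows the paper's proof essentially step for step: Jensen's inequality on the concave map $x\mapsto\log_2(1+\bar\gamma x)$ for the upper bound, and a second-order Taylor expansion of $\phi(g)=\log_2(1+\bar\gamma g)$ about the mode $l_{N,t}$ for the lower bound, with the conditional Gumbel moments $\Bbb{E}(g|g_0=t)=l_{N,t}+\Upsilon a_{N,t}$ and $\Bbb{E}((g-l_{N,t})^2|t)=a_{N,t}^2(\Upsilon^2+\pi^2/6)$ substituted and the outer integration against $e^{-t}$ performed last; your identification of the coefficient $\tfrac12(\Upsilon^2+\pi^2/6)=(6\Upsilon^2+\pi^2)/12$ is exactly what the paper uses.

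The one place you go beyond the paper is also the one place your argument would not go through as written. The paper disposes of the remainder by writing it in Lagrange form as $\phi'''(\tilde\xi)\,\Bbb{E}((g-l_{N,t})^3|t)/3!$ with a single $\tilde\xi$ and invoking $\phi'''>0$ together with the positive third central moment $a_{N,t}^3(\Upsilon^3+\Upsilon\pi^2/2+2\zeta(3))$ --- a step that is itself not rigorous, since pulling $\phi'''$ out of the expectation at a single point requires the weight $(g-l_{N,t})^3$ to have constant sign, which it does not. You correctly flag this as the crux, but your proposed repair --- use the monotonicity of $\phi'''$ to let the positive right-tail contribution of $R$ dominate the negative left-tail one --- points in the wrong direction: $\phi'''(s)=2\bar\gamma^3/(\ln 2\,(1+\bar\gamma s)^3)$ is \emph{decreasing}, so in the integral form $R(g)=\tfrac12\int_{l_{N,t}}^{g}\phi'''(s)(g-s)^2\,{\rm d}s$ the kernel weights the region $g<l_{N,t}$ (where $R<0$) \emph{more} heavily than the region $g>l_{N,t}$ (where $R>0$). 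Monotonicity alone therefore cannot deliver $\Bbb{E}(R|t)>0$. What actually makes the sign come out right is distributional rather than analytic: the Gumbel left tail decays doubly exponentially and $a_{N,t}=o(l_{N,t})$ as $N\to\infty$, so $g$ concentrates in a window of width $O(a_{N,t})$ around $l_{N,t}$ on which $\phi'''$ is essentially constant, and the positive skewness of the Gumbel law then forces $\Bbb{E}(R|t)\simeq\phi'''(l_{N,t})a_{N,t}^3(\Upsilon^3+\Upsilon\pi^2/2+2\zeta(3))/6>0$. A clean proof would bound the variation of $\phi'''$ over that window and show the correction is of lower order than this leading term; neither your sketch nor the paper's appendix actually carries that out.
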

\begin{proof}
    Please see Appendix \ref{sec_th_2}.
\end{proof}

By combining the asymptotic results of $l_{N,t}$ and $a_{N,t}$ in Theorem \ref{th_h_dis} with Theorem \ref{th_cap}, the dominant term approximation shows that both the upper and the lower bounds of the asymptotic EC take the same form, i.e., $\bar{C}_{\rm asy}^{\rm upp} ( \bar{C}_{\rm asy}^{\rm low}) \propto \log_2(1+\bar{\gamma}\sigma^2(1-\mu^2)\ln N)$. The squeeze theorem thus establishes that the EC increases according to a double-logarithmic scaling law with respect to the number of antenna ports in FAS. In addition, the asymptotic results also disclose that the correlation parameter $\mu$ negatively impacts the EC. 


In Theorem \ref{th_h_dis}, the mode of the equivalent channel gain $g$ increases logarithmically with the number of ports. This observation prompts a natural question: does the mode of the mutual information (i.e., \({\rm Mo}(\mathcal I|g_0)\)) exhibit a similar logarithmic scaling?
To investigate this, we derive the conditional PDF of the mutual information \(\mathcal{I}\) given \(g_0\). Using the transformation \(g = (2^{\mathcal{I}} - 1)/\bar{\gamma}\), we arrive at
\begin{align}
    f_{\mathcal{I}|g_0}(z) = \frac{2^z \ln 2}{\bar{\gamma}} f_{g|g_0}\left( \frac{2^z - 1}{\bar{\gamma}} \right).
\end{align}

The mode of \(f_{\mathcal{I}|g_0}(z)\) is found by setting its first derivative to zero, which leads to the condition
\begin{align}\label{z_pdf_der}
    f^\prime_{g|g_0}\left( \frac{2^{{\rm Mo}(\mathcal I|g_0)} - 1}{\bar{\gamma}} \right) = -\frac{\bar{\gamma}}{2^{{\rm Mo}(\mathcal I|g_0)}} f_{g|g_0}\left( \frac{2^{{\rm Mo}(\mathcal I|g_0)} - 1}{\bar{\gamma}} \right).
\end{align}

Although a closed-form solution for \({\rm Mo}(\mathcal I|g_0)\) is intractable, \eqref{z_pdf_der} reveals that \(f^\prime_{g|g_0}(\cdot) < 0\). This implies the inequality \({(2^{{\rm Mo}(\mathcal I|g_0)} - 1)}/{\bar{\gamma}} > l_{N,t}\), and consequently, the mode satisfies
\begin{align}\label{eqn:mo_I}
    {\rm Mo}(\mathcal I|g_0) &> \log_2(1 + \bar{\gamma} l_{N,t}) \notag\\
    & \simeq  \log_2 \left( 1 + \bar{\gamma} \sigma^2 (1 - \mu^2) \ln N \right)\propto \ln\ln N,
\end{align}
where the second step holds by using the asymptotic result \(l_{N,t} \simeq \sigma^2 (1 - \mu^2) \ln N\). This result demonstrates that the mode of the mutual information indeed follows a double-logarithmic scaling law with the number of fluid antenna ports. 

\section{Numerical Results}\label{sec_num}
In this section, numerical results are presented to demonstrate our analytical results. 
For illustration, the FAS parameters are configured as follows  unless otherwise specified: $\bar{\gamma}=10 $ dB, $W=0.3$, $\sigma^2=1$, $N=15$ and $R=4$ bit/s/Hz. The labels ``Asy.'' and ``Sim.'' refer to the asymptotic and the simulated results, respectively.

\begin{figure*}[ht]
    \centering
    \subfigure[]{
        \includegraphics[width=0.23\linewidth]{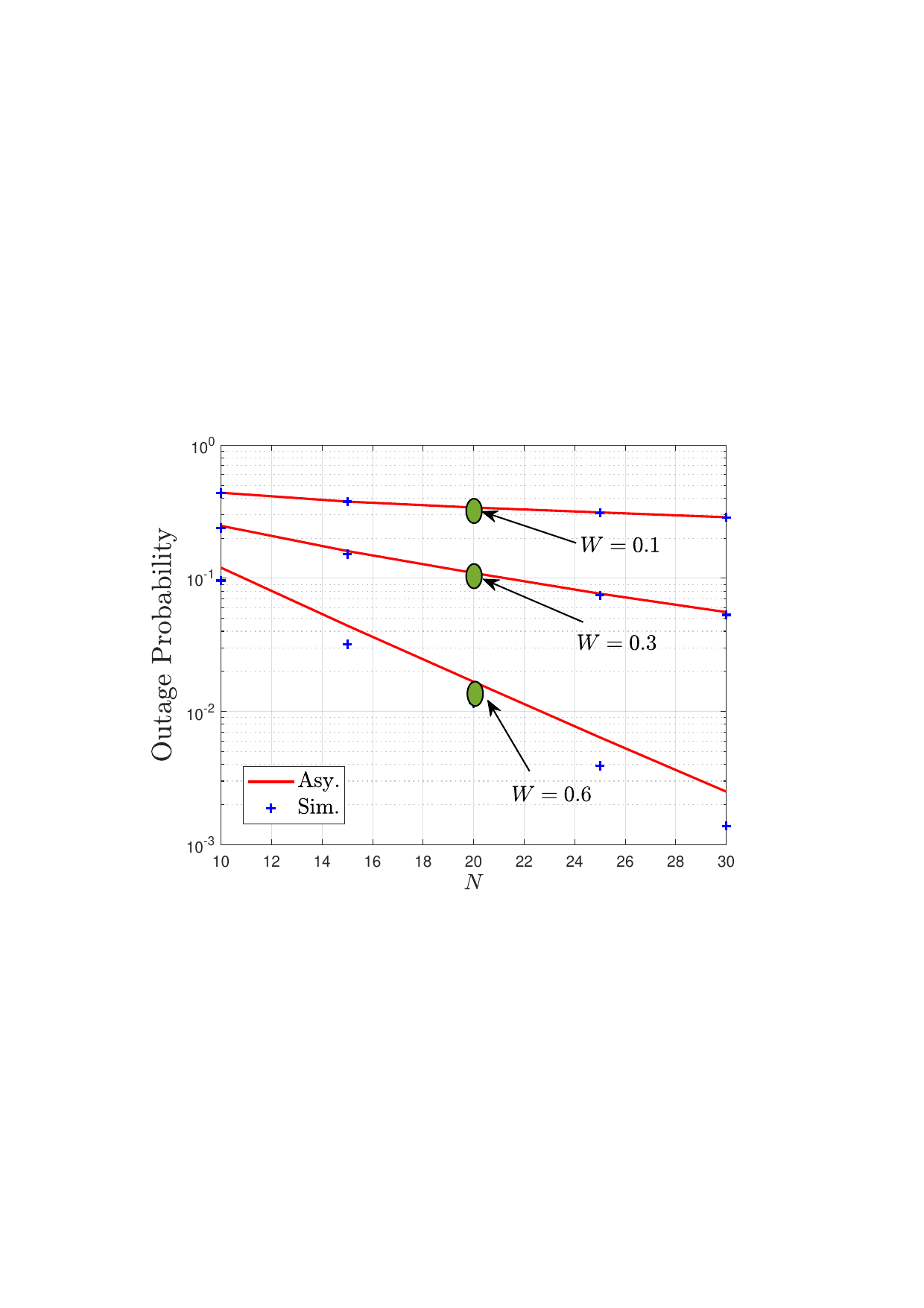}
        \label{op_N}
    }
    \subfigure[]{
        \includegraphics[width=0.23\linewidth]{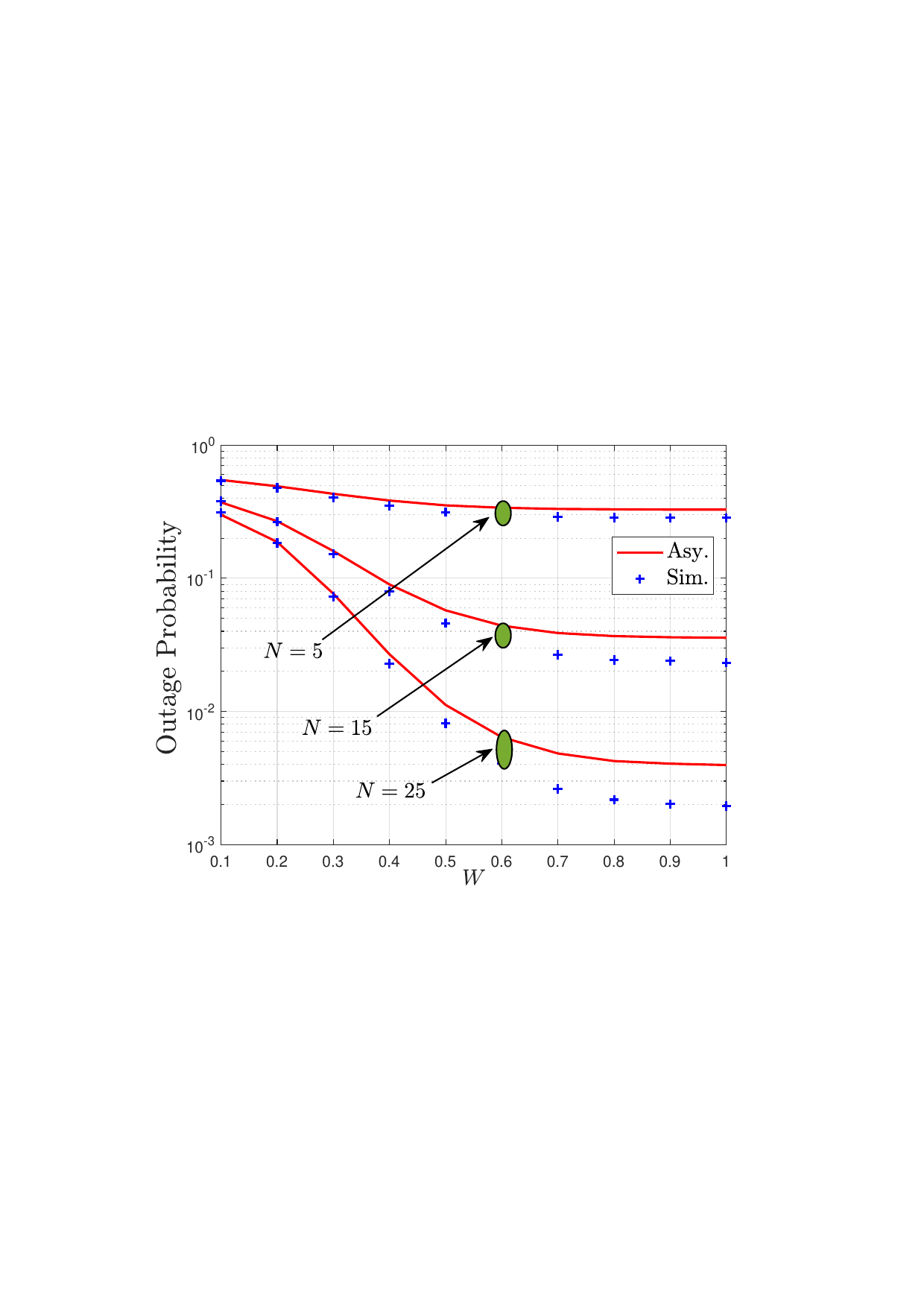}
        \label{op_cor}
    }
    \subfigure[]{
        \includegraphics[width=0.23\linewidth]{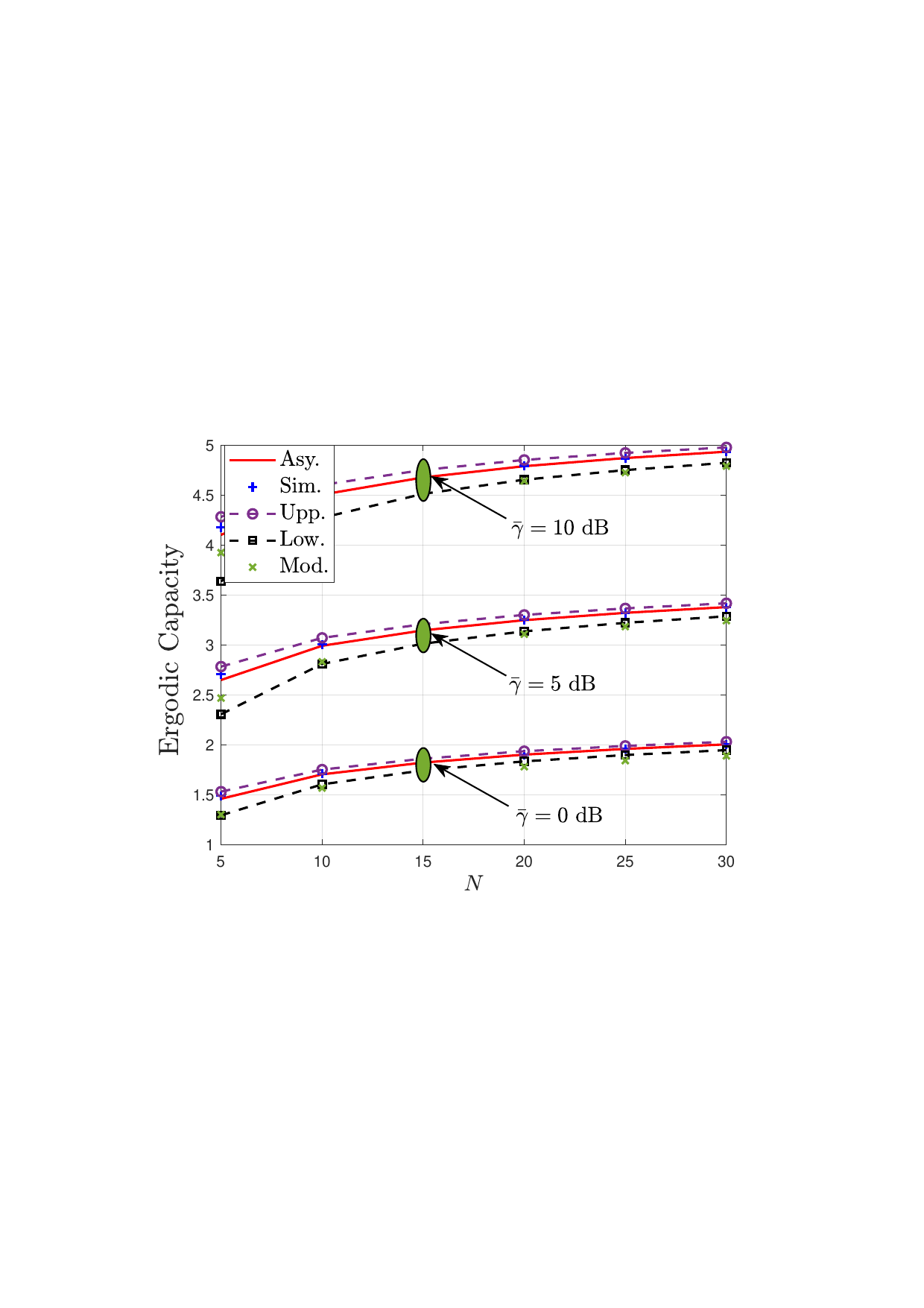}
        \label{ec_N}
    }
    \subfigure[]{
        \includegraphics[width=0.23\linewidth]{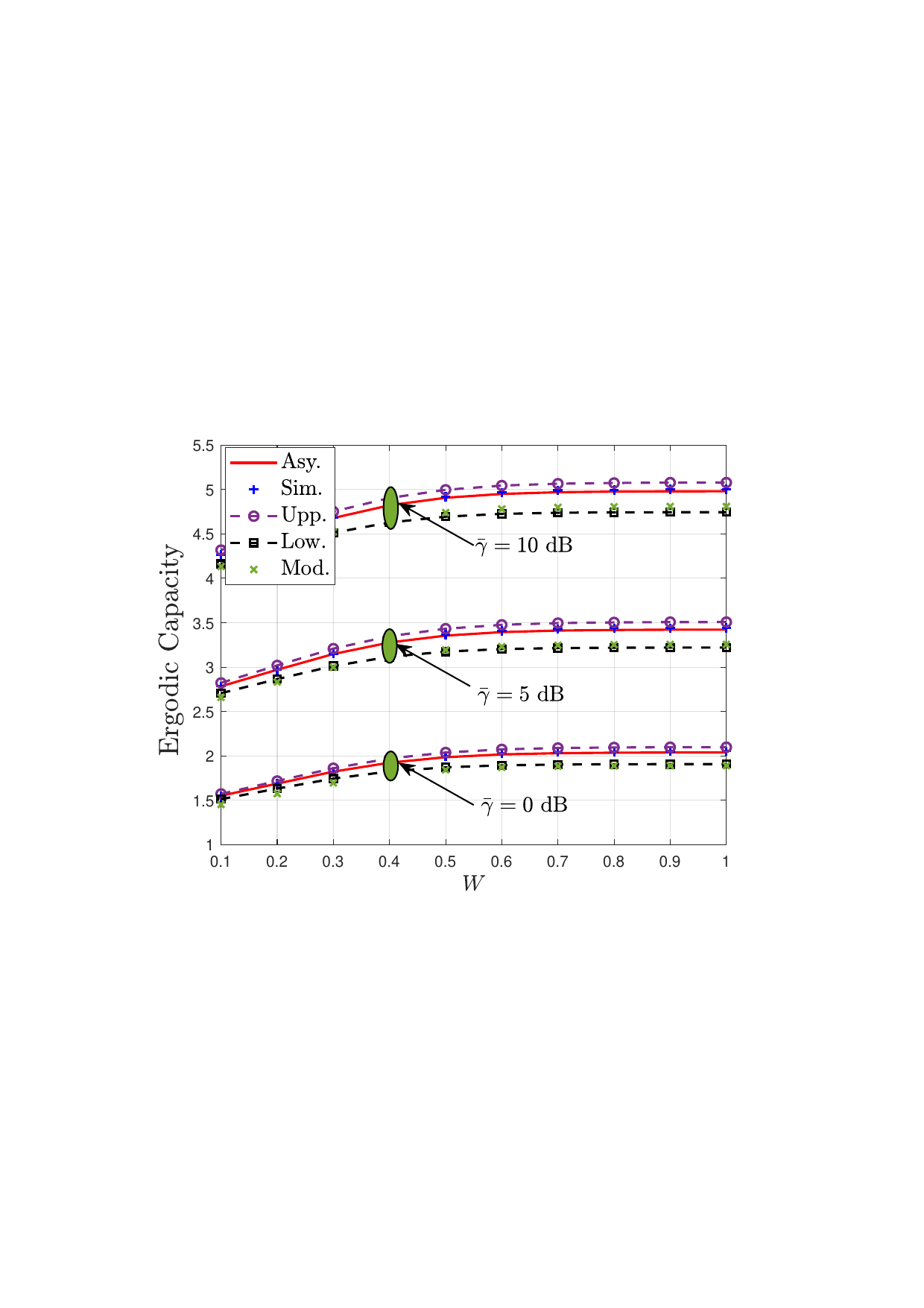}
        \label{ec_mu}
    }
   \caption{(a) The OP versus $N$; (b) The OP versus $W$; (c) The EC versus $N$; (d) The EC versus $W$.}
  \label{fig:all_results}
  \vspace{-0.5cm}
\end{figure*}


As shown in Fig.~\ref{op_N}, the OP is plotted against the number of antenna ports $N$. Clearly, the asymptotic results align closely with the simulated ones, validating our asymptotic analysis. Moreover, the OP exhibits an approximate exponential decay with $N$ owing to the extra spatial degree of freedom offered the FAS. In Fig.~\ref{op_cor}, we investigate the impact of the normalized length $W$ on the OP. The results depict that the outage performance improves with increasing $W$, particularly in the small $W$ regime. This notable gain is attributed to the low spatial correlation between antenna ports, as quantified by \eqref{mu_W}. Moreover, there exists an outage performance floor if $W \ge 0.8$, which coincides with \cite[Eq. 10]{wong2022closed}. 



The effects of $N$ and $W$ upon the EC are illustrated in Figs.~\ref{ec_N} and \ref{ec_mu}, respectively, where ``Upp.'' and ``Low.'' denote the upper and the lower bounds of the asymptotic EC, respectively, the expectation of \({\rm Mo}(\mathcal I|g_0)\) (i.e., ${\mathbb E_{{g_0}}}\left({\rm Mo}(\mathcal I|g_0)\right)$) over $g_0$ is labeled as ``Mod.''. From both figures, the asymptotic results agree well with the simulated ones, and accurately lie between its derived upper and lower bounds. Moreover, the results of ``Mod'' match well with the lower bound. This is because ${\mathbb E_{{g_0}}}\left({\rm Mo}(\mathcal I|g_0)\right)$ is factually the first term of the lower bound in \eqref{lowerbound_ER} and the other two terms become negligible for a large $N$. Besides, Fig. \ref{ec_N} successfully illustrates the predicted double-logarithmic growth trend of the EC with respect to $N$, thereby validating the scaling law of EC. 

\section{Conclusion}\label{sec_con}
This paper has applied EVT to perform an asymptotic analysis of the FAS, deriving the asymptotic OP and EC under a large number of antenna ports. The key findings conclusively show that the OP decays approximately exponentially with the number of antenna ports, while the EC follows a double-logarithmic scaling law. These scaling behaviors are further corroborated by the logarithmic-like growth of the modes of both the channel gain and the mutual information. Furthermore, the asymptotic analysis confirms the negative influence of spatial correlation on the OP and the EC. All theoretical findings are supported by our numerical observations.

\appendices
\section{Proof of Theorem \ref{th_h_dis}}\label{sec_th_1}
To prove that the distribution of $g$ converges to the Gumbel distribution as $N\to \infty$, it suffices to show that \eqref{le2_lim} holds according to Lemma \ref{le2_ind}. To proceed, the asymptotic expressions of $f_{|h_n|^2|g_0}(x_n|t)$ and $1-F_{|h_n|^2|g_0}(x_n|t)$ as $x_n\to \infty$ should be derived. Putting the asymptotic expansion of the modified Bessel function $I_0(x)\simeq \frac{e^x}{\sqrt{2\pi x}}$ as $x\to \infty$\cite[Eq.9.7.1]{abramowitz1968handbook} into \eqref{h_pdf} results in the asymptotic expression of $f_{|h_n|^2|g_0}(x_n|t)$ as
\begin{align}\label{pdf_asy}
    &f_{|h_n|^2|g_0}\left(\left.x_n\right|t\right)\notag\\
    &\simeq\frac{(\sigma^2{\mu}^2x_nt)^{-\frac1{4}}}{{\sqrt{4\pi \sigma^2(1-\mu^2)}}}\exp\left(-\frac{(\sqrt{x_n}-\sigma{\mu}\sqrt{t})^2}{\sigma^2(1-\mu^2)}\right),
\end{align}
where $\simeq$ means ``is proportional to''. In addition, the asymptotic result of $1-F_{|h_n|^2|T}$ can be obtained as \eqref{cdf_asy_2}, as shown at the top of the next page, where step (a) holds by employing the asymptotic approximation $Q_1(\alpha,\beta)\simeq \sqrt{\frac{\beta}{\alpha}}Q(\beta-\alpha)$ as $\beta\to\infty$ \cite{temme1993asymptotic}, step (b) holds by using $Q(x)=\frac1{2}{\rm{erfc}}(\frac{x}{\sqrt{2}})$, step (c) holds by using ${\rm{erfc}}(x)\simeq \frac1{x\sqrt{\pi}}e^{-x^2}$, $Q(\cdot)$ denotes the Guassian Q-function, and ${\rm{erfc}}(\cdot)$ represents complementary error function.
\begin{figure*}[!t]
\begin{multline}
    1-F_{|h_n|^2|g}\left(\left.x_n\right|t\right)=Q_1\left(\sqrt{\frac{2\mu^2t}{1-\mu^2}},\sqrt{\frac{2x_n}{\sigma^2(1-\mu^2)}}\right)\stackrel{(a)}{\simeq} \left(\frac{x_n}{\sigma^2\mu^2t}\right)^{\frac{1}{4}}Q\left(\sqrt{\frac{2x_n}{\sigma^2(1-\mu^2)}}-\sqrt{\frac{2\mu^2t}{1-\mu^2}}\right)\\
\stackrel{(b)}{=} \left(\frac{x_n}{\sigma^2\mu^2t}\right)^{\frac1{4}}\frac1{2}{\rm{erfc}}\left({\frac{\sqrt{x_n}-\sqrt{\sigma^2\mu^2t}}{\sqrt{\sigma^2(1-\mu^2)}}}\right)
\stackrel{(c)}{\simeq}\frac{\sqrt{\sigma^2(1-\mu^2)}}{2\sqrt{\pi}}({\sigma^2\mu^2tx_n})^{-\frac1{4}}\exp\left(-{\frac{(\sqrt{x_n}-\sqrt{\sigma^2\mu^2t})^2}{{\sigma^2(1-\mu^2)}}}\right).\label{cdf_asy_2}
\end{multline}
\hrulefill
  \vspace{-0.5cm}
\end{figure*}
Therefore, by putting \eqref{pdf_asy} and \eqref{cdf_asy_2} into \eqref{le2_lim}, we can get 
\begin{align}
    \lim\limits_{x_n\to {\infty}} \frac{\rm d}{{\rm d} x_n}\frac{1-F_{|h_n|^2|g_0}\left(\left.x_n\right|t\right)}{f_{|h_n|^2|g_0}\left(\left.x_n\right|t\right)}= \frac{\rm d \left(\sigma^2(1-\mu^2)\right) }{{\rm d} x_n}=0.
\end{align}
According to Lemma \ref{le2_ind}, \eqref{eqn:f_cdf} and \eqref{eqn:f_pdf} can be directly obtained.

With the location and the scale parameters given by Lemma \ref{le2_ind}, the mean, the variance, second moment, and third moment of Gumbel distribution can be obtained by using \cite{chattamvelli2022gumbel}. Furthermore, by plugging \eqref{pdf_asy} and \eqref{cdf_asy_2} into the expressions of the location and the scale parameters, the asymptotics of $l_{N,t}$ and $a_{N,t}$ are derived as $l_{N,t}=\left(\sqrt{\sigma^2(1-\mu^2)\ln N}+\sqrt{\sigma^2\mu^2t}\right)^2+o( \ln N)\simeq \sigma^2(1-\mu^2)\ln N$ and $a_{N,t}\simeq \sqrt{4\pi\mu\sigma^4(1-\mu^2)^{3/2}}(t\ln N)^{\frac14}$ as $N$ approaches to $\infty$, where $Q_1^{-1}$ denotes the inverse first-order Marcum Q-function and $o(\cdot)$ represents the little-O notation. 

Moreover, to get the mode of the distribution of $g$ conditioned on $g_0$, it suffices to find the maximum value of the PDF $f_{g|g_0}(x|t)$. Towards this end, taking the first and second derivatives of $f_{g|g_0}(x|t)$ yields 
\begin{align}
f_{g|g_0}^\prime(x|t)
=&\frac{1}{a_{N,t}^2}e^{-e^{-\frac{x-l_{N,t}}{a_{N,t}}}-\frac{x-l_{N,t}}{a_{N,t}}}\bigl(e^{-\frac{x-l_{N,t}}{a_{N,t}}}-1\bigr)\label{pdf_t_der},\\
f_{g|g_0}^{\prime\prime}(x|t)=&\frac{1}{a_{N,t}^3}e^{-e^{-\frac{x-l_{N,t}}{a_{N,t}}}-\frac{x-l_{N,t}}{a_{N,t}}}\notag\\
&\times\left(e^{-\frac{2(x-l_{N,t})}{a_{N,t}}} -3e^{-\frac{x-l_{N,t}}{a_{N,t}}}+1\right).\label{pdf_t_der2}
\end{align}
Clearly, we have $f_{g|g_0}^\prime(l_{N,t}|t)=0$ and $f_{g|g_0}^{\prime\prime}(l_{N,t}|t)=-{1}/{a_{N,t}^3}<0$. Accordingly, the mode ${\rm Mo}(g|g_0) =  \arg\max_{x} f_{g|g_0}(x|t) = l_{N,t}$ is substantiated. Furthermore, Van Zwet condition indicates that the mean of the distribution $f_{g|g_0}(x|t)$ is larger than its mode, i.e., $\mathbb E(g|g_0) \ge {\rm Mo}(g|g_0)$. The proof is thus completed.

\section{Proof of Theorem \ref{th_cap}}\label{sec_th_2}



To prove Theorem \ref{th_cap}, it is imperative to obtain the upper and the lower bounds of $\Bbb{E}_{g|g_0}(\log_2(1+\bar{\gamma}x)|t)$ according to \eqref{eqn:er_der}.

\subsection{Upper Bound}
By noticing the concavity of logarithmic functions, leveraging Jensen's inequality produces the upper bound of the conditional EC as
\begin{align}\label{eqn:er_upp_g}
 \Bbb{E}_{g|g_0}\left( \log_2\left(1+\bar{\gamma}x\right)|t\right) \le \log_2\left(1+\bar{\gamma}\Bbb{E}\left(g|g_0=t\right) \right).
\end{align}
By substituting $\Bbb{E}_{g|{g_0}}(g|t)=l_{N,t}+\Upsilon  a_{N,t}$ in Theorem \ref{th_h_dis} into \eqref{eqn:er_der}, the upper bound of the asymptotic EC can be obtained as \eqref{eqn:upp_ec}.

\subsection{Lower Bound}
The lower bound \eqref{lowerbound_ER} is proved by using Taylor expansion of $\phi(g) \triangleq \log_2(1+\bar{\gamma}g)$. In particular, we expand the Taylor series of $\phi(g)$ around the mode ${\rm Mo}(g|g_0) = l_{N,t}$ as
\begin{multline}\label{low_bound}
    \phi(g)=\phi(l_{N,t})+\phi^\prime(l_{N,t})(g-l_{N,t})\\
    +\phi^{\prime\prime}(l_{N,t})(g-l_{N,t})^2/2!+\phi^{\prime\prime\prime}(\xi)(g-l_{N,t})^3/3!.
\end{multline}
where $\xi \in [l_{N,t},g]$. 
Taking the conditional expectation with respect to $g$ given $g_0$ on both sides of \eqref{low_bound} gives
\begin{align}\label{low_bound_2}
    \Bbb{E}_{g|g_0}(\phi(g)|t)=& \phi(l_{N,t})+\phi^\prime(l_{N,t})(\Bbb{E}_{g|g_0}(g|t)-l_{N,t})\notag\\
    &+\phi^{\prime\prime}(l_{N,t})\Bbb{E}_{g|g_0}((g-l_{N,t})^2|t)/2!\notag\\
    &+\phi^{\prime\prime\prime}(\tilde \xi)\Bbb{E}_{g|g_0}((g-l_{N,t})^3|t)/3!,
\end{align}
where $\tilde \xi \in [l_{N,t},g]$. By using the result of the first three moments in Theorem \ref{th_h_dis}, $\Bbb{E}_{g|g_0}((g-l_{N,t})^2|t)$ and $\Bbb{E}_{g|g_0}((g-l_{N,t})^3|t)$ are obtained as
\begin{align}\label{eqn:2or_g0}
    \Bbb{E}_{g|g_0}((g-l_{N,t})^2|t)
&=a_{N,t}^2\left(\Upsilon ^2+\frac{\pi^2}{6}\right),
\end{align}
\begin{align}\label{eqn:3or_g0}
    &\Bbb{E}_{g|g_0}((g-l_{N,t})^3|t)\notag\\
    &=\Bbb{E}_{g|g_0}(g^3|t)-3l_{N,t}\Bbb{E}_{g|g_0}(g^2|t)+3l_{N,t}^2\Bbb{E}_{g|g_0}(g|t)-l_{N,t}^3\notag\\
&=a_{N,t}^3\left(\Upsilon ^3+\frac{\Upsilon \pi^2}{2}+2\zeta(3)\right)>0.
\end{align}
On the basis of $\phi^{\prime\prime\prime}(l_{N,t})=2\gamma^3/((1+\gamma l_{N,t})^3\ln2)>0$ and \eqref{eqn:3or_g0}, substituting $\Bbb{E}_{g|{g_0}}(g|t)=l_{N,t}+\Upsilon  a_{N,t}$ in Theorem \ref{th_h_dis} and \eqref{eqn:2or_g0} leads to the lower bound of $\Bbb{E}_{g|g_0}(\phi(g)|t)$ as
\begin{align}\label{low_bound_3}
    \Bbb{E}_{g|g_0}(\phi(x)|t)>&\phi(l_{N,t})+\Upsilon  a_{N,t} \phi^\prime(l_{N,t})\notag\\
&+\frac{a_{N,t}^2}{2}\left(\Upsilon ^2+\frac{\pi^2}{6}\right)\phi^{\prime\prime}(l_{N,t}).
\end{align}
As a consequence, putting \eqref{low_bound_3} into \eqref{eqn:er_der} leads to the lower bound of the asymptotic EC as \eqref{lowerbound_ER}.

\bibliographystyle{IEEEtran}
\bibliography{IEEEabrv,references}

\end{document}